\let\svthefootnote\thefootnote
\newcommand\freefootnote[1]{%
  \let\thefootnote\relax%
  \footnotetext{#1}%
  \let\thefootnote\svthefootnote%
}
\title{Don't Use LLMs to Make Relevance Judgments}
\author{Ian Soboroff\\
  National Institute of Standards and Technology\\
  Gaithersburg, Maryland, USA\\
\url{ian.soboroff@nist.gov}}
\begin{document}
\maketitle

\begin{abstract}
  Relevance judgments and other truth data for information retrieval (IR) evaluations are created manually.  There is a strong temptation to use large language models (LLMs) as proxies for human judges.  However, letting the LLM write your truth data handicaps the evaluation by setting that LLM as a ceiling on performance.  There are ways to use LLMs in the relevance assessment process, but just generating relevance judgments with a prompt isn't one of them.\footnote{This article reflects the views of the author and not necessarily those of NIST or of the U. S. Government.}
\end{abstract}

\section{Introduction}

The Text Retrieval Conference (TREC) is a community evaluation and dataset construction activity sponsored by the U.S. National Institute of Standards and Technology (NIST).  TREC has run annually since 1991.  TREC is divided into {\it tracks} which embody specific search tasks.  The canonical TREC task is {\it adhoc search}, with searches against a static set of documents, each search returning a single ranked list of documents.  An individual search instance is called a {\it topic} and expresses the user's information need in long form rather than providing a query.  The relevance judgments, or {\it qrels}, maps each topic to the documents that should be retrieved for it.  The combination of the document collection, the topics, and the relevance judgments is called a {\it test collection}. 

The relevance judgments representing ground truth are created collaboratively between participants and NIST using a process called {\it pooling}~\citep{pooling}.  TREC participants use their IR systems to return the top $K$ documents for each topic.  The union of the top-ranked $k \ll K$ documents from each participant system is called the {\it pool}.  The documents in the pool are reviewed by the person who invented the topic, and they decide which documents are relevant and which are not.  Using the qrels as labels we can compute various measures of retrieval effectiveness such as precision and recall.  The test collections let researchers rapidly innovate new search algorithms in a laboratory setting before deploying them to a live system.  More information about TREC can be found in~\cite{trec-book}, a book covering the first ten years of the program.

The process used in TREC descends from the Cranfield indexing experiments conducted by Cyril Cleverdon in the 1960s, and so we say TREC is following the Cranfield paradigm~\citep{cranfield}.  Central to the Cranfield paradigm is a set of assumptions that simplify the search problem: the document collection and information needs are fixed, all documents are labeled as relevant or not relevant to every query, relevance is modeled by topical similarity, the relevance of a document is independent of the relevance of any other document, there is a single query that is answered with a single ranked list, and the relevance judgments are representative of the user population~\citep{voorhees-phil}.  TREC can be thought of as a community effort in pushing the bounds of the Cranfield paradigm.  Complete judgments were replaced with the pooling procedure, which has been shown to be sufficient for measuring the pooled systems and also useful for measuring systems which were not pooled for evaluation, as long as certain properties are maintained~\citep{DBLP:conf/trec/Harman95, Zobel1998, Buckley2007}.  Likewise, many TREC tracks push back on the notion of static documents and information needs~\citep{DBLP:conf/trec/FrankKRVS14,DBLP:conf/trec/CarteretteKHC14}, relevance as topical similarity~\citep{DBLP:conf/trec/CraswellH04,DBLP:conf/trec/BalogSV11}, single rankings~\citep{DBLP:conf/trec/Owoicho0AATV22,DBLP:conf/trec/AliannejadiACDA23}, and independent relevance~\citep{soboroff-harman-2005-novelty}.

Making the relevance judgments for a TREC-style test collection can be complex and expensive.  Relevance assessing at NIST for a typical TREC track usually involves a team of six contractors working for 2-4 weeks.  Those contractors need to be trained and monitored.  Software has to be written to support recording relevance judgments correctly and efficiently.  Experience in both the technical and human aspects of the process counts for a lot, which is why we run evaluation campaigns rather than everyone building their own test collections.  Evaluation campaigns are infrastructure for IR research.

The recent advent of large language models that produce astoundingly human-like flowing text output in response to a natural language prompt has inspired IR researchers to wonder how those models might be used in the relevance judgment collection process~\citep{dagstuhl-report,llm4eval-cacm}.

At the ACM SIGIR 2024 conference, a workshop ``LLM4Eval'' provided a venue for this work, and featured a data challenge activity where participants reproduced TREC deep learning track judgments, as was done by~\cite{thomas2024}. I was asked to give a keynote at the workshop, and this paper presents that keynote in article form.

The bottom-line-up-front message is, don't use LLMs to create relevance judgments for TREC-style evaluations.

\section{Automatic evaluation}

The idea of automatic evaluation for information retrieval came from a paper I wrote with Charles Nicholas and Patrick Cahan in 2001~\citep{Soboroff2001}.  I had been reading Ellen Voorhees well-known SIGIR paper from 1998~\citep{Voorhees1998} which shows experimentally that while people differ in their judgments of relevance, those differences don't affect the relative ordering of systems in a TREC evaluation.  Surprised by this result, I wondered what would happen if the relevance judgments were randomly sampled from the pools.  Certainly, TREC relevance judgments aren't random, but how much can they vary towards random and still rank systems equivalent to the official system ranking?

A representative example result from that work is shown in Figure~\ref{t8-trecstyle}.  The single $+$'s are official scores from TREC, and the $\times$'s with whiskers up and down are scores obtained with random documents from the pool labeled as relevant.  The points are ordered according to their official TREC scores.  The key point to notice is that, using random judgments, the best systems (those with the highest MAP, on the left) look like the worst systems (on the right).

\begin{figure}
  \includegraphics[width=\textwidth]{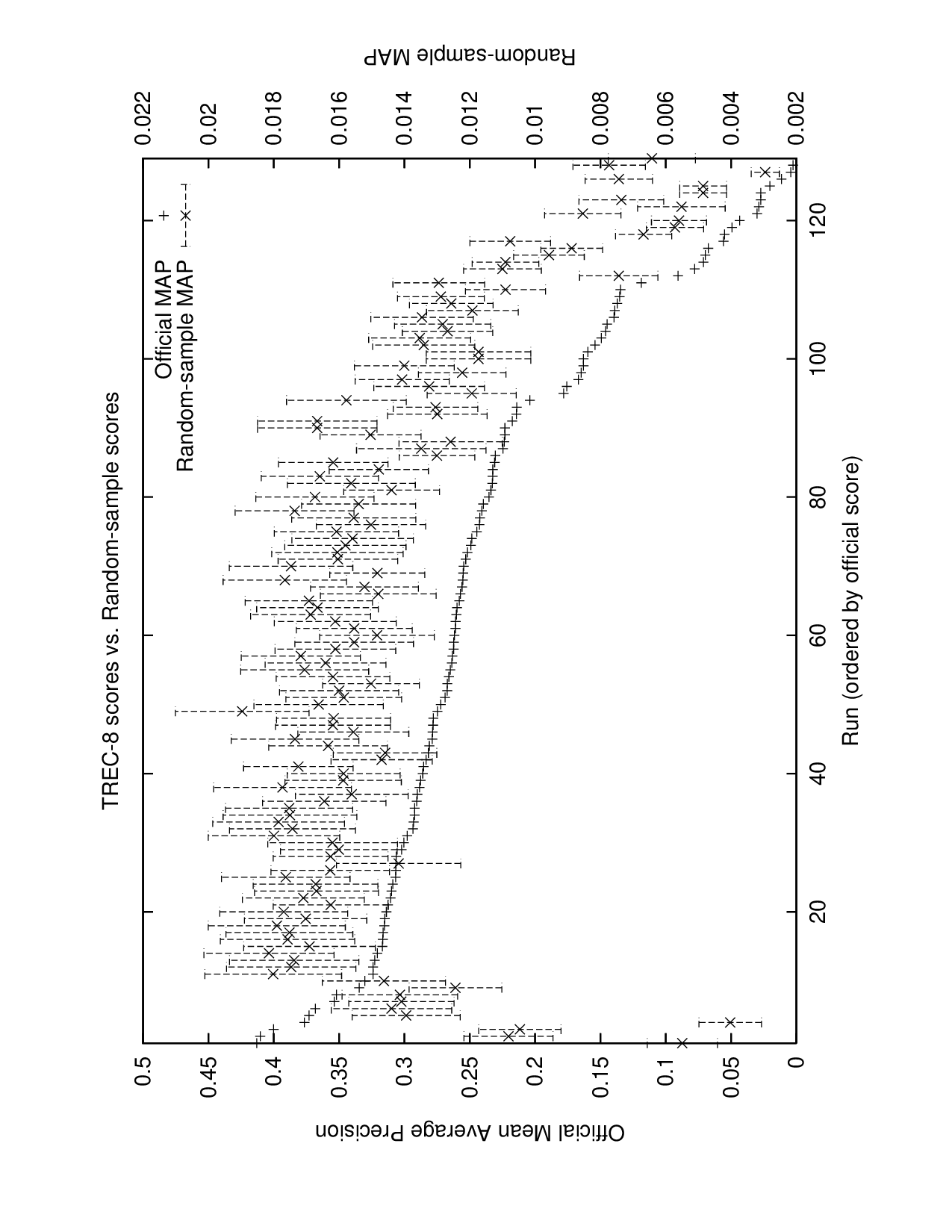}
  \caption{\label{t8-trecstyle}Sample result from \cite{Soboroff2001}, TREC-8, TREC-style pooling to depth 100.}
\end{figure}

Automatic evaluation in this sense means making relevance judgments using an algorithm rather than people, as opposed to inducing relevance from implicit behavior cues or history.  In published papers proposing automatic evaluation methods for IR, the quality of these methods is quantified by comparing the ordering of systems induced by the automatic method to the official TREC ranking based on manual assessments.  That is, the qrels are used to score each system using some metric, and the systems are then ordered by their score.  The common metric used is Kendall's tau ($\tau$), a correlation measure between rankings.  Some researchers also use the more familiar Spearman's rho ($\rho$), a correlation measure that takes into account the distance between the points and not just their rank order.  Others have proposed versions of these correlations that emphasize the upper part of the ranking (i.e. the best systems)~\citep{Yilmaz2008}.  I have heard a discussion of a variation of tau where a swap in position between two systems only happens if they are significantly different according to some statistical test, which would take advantage of the fact that the points in the ranking represent average performance over a set of topics.

\cite{Aslam2003} published a short paper that explained my 2001 results.  No matter how many systems retrieve a given document, it is only added to the pool once.  There are many more nonrelevant than relevant documents, so a given relevant document was likely to have been retrieved by more than one system.  (The mean number of systems retrieving a nonrelevant document in TREC-8 is 3, versus 11 for a relevant document.)
By selecting the relevant documents at random, I was implicitly selecting documents retrieved by many systems.  So my 2001 paper shows the results of a popularity contest.
Under this approach, the worst systems and the best systems both look bad, because they fail (or succeed) by retrieving documents that other systems don't find.  Another way to think about Aslam and Savell is that by using the output of a system as the ground truth, I am measuring the similarity of the two systems, how close the retrieval system is to the model that created the ground truth.

Around that time, the BLEU automatic metric for machine translation~\citep{papineni-etal-2002-bleu} and the ROUGE automatic metric for summarization~\citep{lin-2004-rouge} were published.  These measures compare system-generated outputs, such as translations and summaries, by the overlap of word $n$-grams with a model or reference output.  ROUGE worked well for extractive summarization, where a summary is produced by cutting and pasting sentences from the source documents, but less well in a generative setting where word choice could vary quite a bit from the original documents.

\section{Machine learning and predicting from examples}

A more successful method of imputing relevance comes in the form of relevance feedback and more sophisticated machine learning algorithms.  In these cases, examples of relevant (and possibly nonrelevant) documents are used to train a model to predict the relevance of other documents.

Relevance feedback (RF) in the vector space model, developed by Joseph Rocchio and Eleanor Ide in the mid to late 1960s as part of the SMART system~\citep{salton-mcgill-1983}, may be the first on-line machine learning algorithm.\footnote{Rocchio did not describe his method as machine learning, but he did develop a theory that his relevance feedback method builds an optimal query.}  In RF, the user executes an initial search and identifies one or more documents in the search results as being relevant or irrelevant.  The terms in the query are augmented and reweighted based on the feedback, and the refined query is executed to rank the remaining documents in the collection.  Since then, it has been adopted as a general IR technique rather than a specific algorithm and has been instantiated within nearly every ranking model.  Currently, the most common implementation of RF uses the BM25 ranking algorithm with the RM3 method of term weighting~\citep{DBLP:conf/trec/JaleelACDLLSW04}.  Pseudo-relevance feedback (PRF) is a modification in which instead of indicating relevance by the user, some number of documents ranked highest in the initial ranking are assumed to be relevant for a round of relevance feedback~\citep{DBLP:conf/trec/BuckleySAS94}.  Relevance feedback is one of the most successful techniques in information retrieval, producing large improvements in performance.  PRF is somewhat similar except that for some topics it fails because the initial retrieval is off base for some reason.

\cite{Buttcher2007} proposed using the TREC relevance judgments to predict the relevance of unjudged documents retrieved by unpooled systems, and also as a method for expanding the set of relevance judgments overall.  They use the qrels to train a binary classifier and then apply that to documents that were not judged but were ranked above the pool depth of TREC's pools.  Anecdotally, this technique did not perform as well when all retrieved documents (down to rank 1000) were predicted, so there is something to restricting predictions to those documents that are already ranked highly by the search ranker.

\cite{Rajput2012} describe an iterative method using {\it nuggets}.  A nugget here is a manually selected passage from a relevant document.  Starting with manual nuggets, the process identifies new high-probability shingles as new nuggets and uses those to predict the relevance of other documents.  Their use of the term ``nugget'' is different than how the term is used for evaluation of summarization and question answering; summarization nuggets are atomic pieces of information which must be manually aligned to the generated summary, whereas these nuggets, being strings or shingles, are automatically matched.  Nuggets in~\citep{Rajput2012} are essentially lexical patterns that identify relevant documents.

The BERTScore metric \citep{Zhang*2020BERTScore:} computes token similarity between a generated and reference output using BERT embeddings.  We can think of this as the LLM equivalent of BLEU, using embeddings instead of n-grams.

\section{LLM-based predictions of relevance}

Modern large language models (LLMs) have inspired a new approach, where a topic and document are embedded in a prompt, which is then fed to an LLM that outputs some indicator of relevance.  LLMs may be fine-tuned with relevance examples, or other relevant documents may be included in the prompt, but otherwise no examples are used as in the supervised learning methods above.

\cite{thomas2024} describe using LLMs to predict relevance in TREC collections as well as for search results from a major commercial search engine.  They develop prompts at a number of levels of richness.  In their web search results, they find that the generated judgments ``have proved more accurate than any third-party labeler, including staff; they are much faster end-to-end than any human judge, including crowd workers; they scale to much better throughput; and of course are many times cheaper.''  The paper describes results on TREC data in greater detail, and there is an extended discussion of their prompts and their evolution.

\cite{MacAvaney2023} used nearest-neighbors, classifiers, and LLM prompts to elicit relevance judgments to supplement judgments in MS-MARCO~\citep{MSMARCO-2016}, a collection where there is only a single manually-adjudicated relevance judgment per query.  By basing a system's performance measurement on more than one document, IR metrics are found to be more stable.

\cite{Alaofi2024} investigated the agreement between LLMs and TREC assessors and found that  LLM false positive decisions seemed to be related to the presence of query terms in the passage being assessed.  A false positive is where the LLM votes that the passage is relevant, but the human assessor judged it to not be relevant.  In many of these cases, the false positive passages included terms from the query, despite not being relevant.  This seems to imply that despite the richness of the language model, lexical cues can influence the decision more than the true meaning of the text.

Outside of the information retrieval domain, researchers seem to be eagerly jumping on a bandwagon for LLM-based automatic evaluation.  As one example, \cite{lin-chen-2023-llm} employ prompts to gauge generated responses in open-domain dialogues, and compare results to other automatic evaluation techniques, some of which use the LLM to identify properties of good responses \citep{mehri-eskenazi-2020-usr,mehri-eskenazi-2020-unsupervised} and others which use the LLM to directly assess dialog responses \citep{chen-etal-2023-exploring-use,fu-etal-2024-gptscore}.  None of the comparison metrics is validated against manual labels of the dialogues in question.

The search for automatic metrics is long and has made use of new algorithms as they have been developed.  There is a real need for automatic metrics, because manual assessment is slow and hard to scale.  When the labels are created zero-shot, specifically meaning that the evaluation model is operating at the same degree of data exposure as the systems being measured, the evaluation reduces to comparing the performance of the system to the model, not to human performance.  When the evaluation model has more knowledge than the systems being measured, for example relevance judgments on the topics in the test set, then the model may produce an evaluation that can stand as a useful measurement, a comparison to something more than just another system.  When the evaluation model is making use of outside knowledge, for example in \cite{mehri-eskenazi-2020-unsupervised}, then the situation depends on the systems being measured.  The following sections elaborate this argument.

\section{Retrieval and evaluation are the same problem}

Asking a computer to decide if a document is relevant is no different than using a computer to retrieve documents and rank them in order of predicted degree of relevance.  In both cases, the algorithm makes the assessment of relevance.

A retrieval system, or a relevance model, is a model of relevance given available data.  The system is trying to predict which documents are relevant and which documents are not.  Even though real systems might try to optimize a pairwise or listwise output or compute a degree of relevance of a document or a search engine result page, it is useful to think of all these processes as predicting relevance.

During relevance assessment, we are asking the assessor to decide whether documents are relevant or not.  This, too, is essentially a prediction of relevance.  It's a well-informed prediction since the person is reading the document and often composed the information need, but since the task is artificial, the assessor is basically saying that they would include this document in a report on the topic, a report which they don't ever actually write.  We can call that a prediction too.

We use one set of predictions, the relevance judgments, to measure the performance of the other set of predictions, the system outputs.  In doing so, we declare the relevance judgments to be truth.  In fact, you can switch the two sets of predictions, declare the system output to be truth, and measure the ``effectiveness'' of the assessor compared to that of the system.  All evaluations which compare a system output to an answer key are making a measurement with respect to the answer key, not with respect to the universe.

Since both retrieval systems and relevance assessors are making predictions of relevance, evaluation and retrieval are the same problem.  We can imagine a very slow system that would have a human read every document and assess its relevance given a query.

John Searle's ``Chinese Room'' thought experiment\footnote{\url{https://en.wikipedia.org/wiki/Chinese_room}} posits a person in a box who receives questions through a slot and delivers answers out the slot.  The questions and answers are in Chinese, a language which the person does not read or speak.  Rather, the person follows a sophisticated set of instructions for generating an answer from a question, in Chinese, by manipulating symbols on the paper.  Thus, the box appears to understand and communicate in human language but is basically a computer.  A mechanical Chinese Room can be implemented with an LLM chatbot.  Construct a prompt of a topic statement and a document and ask for the LLM to say relevant or not, for every document in the collection.  Asking the language model about relevance is the mirror of evaluation.

If we believed that a model was a good assessor of relevance, then we would just use it as the system.  Why would we do otherwise?  We don't use human assessors that way, because it doesn't scale.  LLMs in 2024 don't scale, but that feels like an engineering problem more than something fundamental; we will probably solve this with better hardware and smaller models.

Since both retrieval and evaluation are prediction activities it seems natural to apply machine learning to both.  The predictions don’t happen in isolation: systems know about collection frequencies and click patterns that inform the ranked list, and assessors have experience and world knowledge that informs their labels.  Machine learning, the field where we train prediction systems by example, clearly has a role to play here.

As with any prediction, there are errors of omission and commission (or false negatives and false positives if you prefer), and those errors represent a maximum discriminative ability of those relevance judgments to distinguish systems.  I will dive into this in more detail in the next section.

\section{The ceiling on performance}

Whatever we use as the answer key represents both an ideal solution and a ceiling on measurable performance.  No system can outperform the evaluation's answer key.  When the answer key is made up of human-assigned labels, then we are saying that human performance is the ideal we are aiming for, and we can’t measure something better than that performance. Likewise, when the answer key is created by a machine learning model or some other mechanical process, we are saying that the model represents the idea we are aiming for, and we can’t measure something better than the performance of that model.  This is the critical flaw with LLM-sourced relevance judgments.

An IR test collection is a 3-tuple:
\[ C = \{ D, S, R \} \]
where $D$ is a set of documents, $S$ is a set of search needs, and $R$ is a function $R: S \mapsto D$ that maps search needs to relevant documents.
In the original Cranfield collections, there is a value of $R$ for every document $d$ and search need $s$.  In the TREC collections, most of those pairs are unknown and pooling lets us assume that an unjudged document is likely not relevant.

A retrieval model produces a ranking of documents $d_n: d_n \in D$ in order of predicted relevance to the search need $s$:
\[ A(s, D) = \{ d_n~\forall~d_n \in D \} \]
where the set here is an ordered set, a sequence of documents where each document appears once.  The entire document collection is ranked although in practice we cut off the ranking much earlier.

An evaluation function $E(A,R)$ computes a real number from a $k$-prefix of the ranked list $A^k$.  Often in TREC $k = 1000$ but some measures set $k$ much lower to focus on the top of the list.  If the number of relevant documents for $s R_s >= k$, then a system can produce a ranking whose prefix consists only of relevant documents.  Thus in practice we try to have search needs with many fewer relevant documents, and a fundamental difficulty with enormous collections like ClueWeb is that we can easily find thousands of relevant documents and still worry that we have not found them all~\citep{Buckley2007}.

The relevance judgments in a Cranfield experiment are a model of human behavior, and since we are trying to build systems that understand information needs and documents as well as humans do, they model ideal retrieval performance.  The evaluation function $E$ is typically defined to be maximized by an ideal ranking, for example if all relevant documents are ranked ahead of any irrelevant documents.
%The goal of the system authors is to create a system that can understand relevance like a [specific] human does. 
If you take the relevance judgments and turn it into a run by first listing all the relevant documents and then padding the listing to $k$ with irrelevant documents, it gets perfect scores on the appropriate metrics.

Historically, this was the goal of IR performance.  IR systems are meant to augment people by scaling up their ability to understand information, and so the performance of people is the ideal.

This ideal is also a limit on what Cranfield can measure.  Under $R$, the best possible ranking
\[ A(s, D) \mapsto \{ +, +, +, +, ... -, -, - \} \]
orders the relevant documents ahead of any irrelevant documents.  The order of relevant documents among themselves, and irrelevant documents among themselves, are not important: there is a very large number of equivalently ideal rankings by permutations among the relevant and irrelevant documents.  For graded relevance regimes, this ranking orders documents by their rated degree of relevance, where those degrees are positive integers, zero for not relevant, and perhaps negative numbers for other poor outcomes like spam, and within each relevance degree or category the documents can be permuted to create equally ideal ranked lists.  If two or more categories are equivalently valuable, we can replace them with a superset including all equivalently valuable documents.  Without loss of generality, moving forward I will assume that rankings can have all the relevant documents ahead of all the irrelevant documents for any relevance construct.

\begin{theorem}[ideal rankings]
  Let $C$ be a test collection $(D, S, R)$ where $R: s \mapsto d$ maps search needs to relevant documents $\{+,+,+, ...\}$. Let $A(s, D)$ be a ranking function that produces a ranking of documents $\{d_n \in D\}$ for a search need $s$.  Let $E: A(s, D), R \mapsto \mathbb{R}$ be an evaluation metric that computes a real number representing the quality of the ranking $A$ given the relevance judgments $R$. Then, we can define the \textbf{ideal ranking} as
 \[ A(s, D) \mapsto \{ +, +, +, +, ... -, -, - \} \]
the ranking the places the relevant documents ahead of any irrelevant documents.  The ideal ranking maximizes $E$, and there does not exist any ranking $A'$ that obtains a higher value than $A$ of $E$ subject to $R$.
\end{theorem}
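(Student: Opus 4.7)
The plan is to prove the theorem by a standard swap (bubble-sort) argument, but the first step is to identify the implicit hypothesis on $E$ that the statement requires. As written, the theorem describes $E$ only as a function mapping $(A, R)$ to $\mathbb{R}$, which is not enough: to force the ideal ranking to be a maximizer we need $E$ to reward placing relevant documents ahead of irrelevant ones. The excerpt acknowledges this just before the theorem by stating that $E$ is \emph{typically defined to be maximized by an ideal ranking}. I would make this explicit as the following monotonicity property on $E$: for any ranking $A'$ and any two positions $i < j$ with $A'[i]$ irrelevant and $A'[j]$ relevant, if $A''$ is obtained from $A'$ by swapping the documents at positions $i$ and $j$, then $E(A'', R) \geq E(A', R)$. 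All standard IR metrics (precision at $k$, recall at $k$, average precision, nDCG, etc.) satisfy this.

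Next, I would prove the theorem by repeated application of this swap. Take any ranking $A'$. If $A'$ is not ideal, then by definition some irrelevant document appears at position $i$ before some relevant document at position $j > i$; swapping them yields $A''$ with $E(A'', R) \geq E(A', R)$ and one fewer such inversion. Because the number of inversions is a non-negative integer that strictly decreases with each swap, iterating terminates after finitely many steps with a ranking in which no irrelevant document precedes any relevant document, i.e.\ an ideal ranking $A$. The metric value is non-decreasing along the chain, so $E(A, R) \geq E(A', R)$. Since $A'$ was arbitrary, the ideal ranking maximizes $E$, and no ranking can exceed it. The irrelevance of the order within the relevant block and within the irrelevant block follows immediately, since the swap argument only controls relative order across the two sets, yielding the "large number of equivalently ideal rankings" observed in the preceding paragraph.

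The main obstacle is conceptual rather than technical: without the monotonicity assumption on $E$ the theorem is vacuous, and with it the theorem is almost tautological. A careful write-up therefore needs to justify why the monotonicity property is the \emph{right} abstraction of what we mean by an IR evaluation metric, and to note that for graded relevance the same argument goes through with "irrelevant versus relevant" replaced by "lower grade versus strictly higher grade," producing the sorted-by-grade ideal ranking mentioned in the paragraph before the theorem. Once the hypothesis on $E$ is stated cleanly, the bubble-sort argument above supplies the proof in a few lines.
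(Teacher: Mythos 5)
Your proof is correct, but it takes a genuinely different route from the paper's. The paper's own proof never touches the metric $E$ directly: it argues entirely at the level of \emph{which documents count as relevant}. It supposes a competing ranking $A'$ containing ``extra'' relevant documents not surfaced by the ideal ranking $A$, and observes a dichotomy --- either those documents are in $R$, in which case $A$ was not ideal after all (contradiction with its definition), or they are not in $R$, in which case $E(\cdot,R)$ cannot reward them and $A'$ gains nothing. Hence $A$ and $A'$ must agree on their set of relevant documents and tie at the maximum of $E$. That argument leans on the surrounding prose (``$E$ is typically defined to be maximized by an ideal ranking'') to supply the maximization claim, and is really making the conceptual point that the answer key $R$ is the sole arbiter of relevance. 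Your approach instead isolates and formalizes exactly the hypothesis the paper leaves implicit --- swap-monotonicity of $E$ --- and then derives the maximization honestly via the bubble-sort/inversion-counting argument. What your version buys is rigor: the theorem becomes a genuine (if near-tautological) consequence of a clean axiom on $E$ that all standard IR metrics satisfy, and the extension to graded relevance falls out for free. What the paper's version buys is the rhetorical point it actually needs for the rest of the argument: no system can be \emph{measured} as retrieving relevant documents beyond those blessed by $R$, which is the ``ceiling'' thesis of the whole article. Your write-up would strengthen the theorem as mathematics; you should just note that your monotonicity axiom is doing the work the paper delegates to the sentence preceding the theorem statement.
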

\begin{proof}
  Suppose a ranking $A'$ that has one or more relevant documents that are not in ideal ranking $A$.  For $A'$ to be ideal, these extra relevant documents must appear at the head of the ranking.  However, the ideal is defined subject to $R$, the full set of relevance judgments, and $A$ is already defined to be ideal.  If there are extra relevant documents missing from $A$, then $A$ is not ideal.  If the new ``relevant'' documents are not in $R$, then $A'$ can't be ideal either.  So $A$ and $A'$ must have the same set of relevant documents in their ranking, and $E(A,R)$ and $E(A',R)$ are equal and maximize $E$.
\end{proof}

If we imagine we have a system that is better than a human, for example by finding relevant documents that are not in the relevance judgments or correctly ranking a document which was assessed incorrectly, that system will score less than perfectly when we score it using the human relevance judgments.  This must be the case, because unjudged documents are assumed to be not relevant, and the documents found by this novel system are either absent from the relevance judgments or judged non-relevant when they should have been marked relevant.  The system has retrieved documents which \textbf{ according to the evaluation relevance judgments} are not relevant.  And so this top-performing system is measured as performing less well than it does.

This is reflected in my 2001 paper, and with other papers that came later, but exemplified by Figure~\ref{t8-trecstyle} above.  The (true) top systems are under-ranked by the ``model'' of relevance.  This must be true for any model of relevance that generates relevance judgments, be it human or machine.  We cannot measure a system that is better than the relevance judgments.  Or, rather, the evaluation can't distinguish such a system from one that performs less than perfectly.

As a counterexample, ~\cite{Buttcher2007} trained a model using an incomplete set of manual judgments to classify a larger set of documents automatically, improving the collection. In this case the evaluation model is privileged in comparison to the runs, in that it has relevance information that they do not.  Relevance feedback nearly always improves performance, so we would expect a hybrid set of judgments like these to have the possibility of outperforming evaluation using the shallow judgments.  This is outperforming the original human but only doing so by retrospective use of human relevance data.\footnote{We still have a grounding problem, in that you may not believe that the model makes accurate predictions.  The process can be improved by doing a second stage of relevance assessments on the classifier outputs in order to estimate the classifier error rates.}

And so when the relevance judgments are created by a person, the model can't exceed the human ideal.  If we had a model that had ``super-human'' performance, we would just make our IR system use that model.  In the current state of the art, the most advanced LLMs are used as components of systems that may be hypothetically measured by relevance judgments generated from the same models.  Those systems cannot perform better than the model generating the relevance judgments.

Obviously, the human that created the relevance judgments is not entirely ideal.  The assessor is not all-knowing, all-seeing, all-reading with perfect clarity.  Assessors make mistakes, and TREC participants are fond of finding them.  More importantly, the assessor is only one person; someone else with the same information need would make different judgments.  If we compared the assessor's judgments to those of a secondary assessor by pretending that secondary assessor is a run, it would necessarily perform less than perfectly.

That means that even if we imagine that systems exist which perform better at the task than humans do, we can't see that improved performance in a Cranfield-style evaluation.  
%This may also be true for other kinds of evaluations, for example a user study where subjects hunt for the relevant documents, or an A/B test where the ``correct'' answers come from some key.  
This follows from the ideal ranking theorem above.  It must also be true for \textbf{any evaluation} where we are comparing a system output to a ``gold standard,'' for example in machine learning or natural language processing, because the gold standard represents ideal performance, and by the ideal ranking theorem, no ranking can be measured as better than the truth data.

The so-called ``super-human'' performance observed on benchmark datasets\footnote{For example, \url{https://openai.com/index/planning-for-agi-and-beyond/} and \url{https://venturebeat.com/ai/google-deepmind-unveils-superhuman-ai-system-that-excels-in-fact-checking-saving-costs-and-improving-accuracy/}  For a contrasting view, see~\cite{Tedeschi2023}.} is actually just measurement error.  Super-human performance would be scored as less than ideal by the established ground truth, because performing better than a human entails making different decisions than those in the ground truth.

Some benchmarks are capable of showing super-human performance by differentiating between the humans performing the task and the humans that create the answer key.  For example, an LLM may perform better than many people on a standardized test, but we can measure that because the humans taking the test are not the source of the correct answers.  Likewise tests of solving analogies or complex math problems.  In IR evaluations, we are only comparing to the answer key, not another person’s attempt to recreate the answer key.\footnote{We wouldn’t do that because assessor disagreement is reality. There are no absolutely correct answers outside the Cranfield room.}

To summarize, you should not create relevance judgments using a large language model, because:
\begin{itemize}
\item You are declaring the model to represent ideal performance, and so you can't measure anything that might perform better than that model.
\item The model used to create relevance judgments is certainly also used as part of the systems being measured.  Those systems will evaluate as performing poorly even if they actually improve on the model, because improving on the model means retrieving new relevant unjudged documents that aren't in the answer key.
\item When the next shiny model comes out, it will measure as performing less well than the old model, because it necessarily must retrieve unjudged documents or ones judged incorrectly as not relevant.  And so the relevance judgments can only measure systems that perform worse than the state of the art at the time the relevance judgments were created.\footnote{This might be a good thing.  Since new models will look worse than old models, when their developers run them through leaderboard-style benchmarks they will cast them aside because they seem to perform poorly.  Then we won't ever have to worry about having a new model.}
\end{itemize}

\section{Limitations}

The argument in this paper makes the case that using an LLM to generate the ground truth for an IR evaluation results in a substandard evaluation.  However, it could certainly be the case that LLMs could play different roles in the evaluation process than inventing the answer key.

LLMs can be used to create the answer key if they have more knowledge about relevance than the systems being measured.  If the evaluation model has access to privileged information, for example by being fine-tuned on manual relevance judgments on the evaluation topics, then those relevance judgments should still be able to measure systems that use the untuned model.  While it might be tempting to assume that the LLM has information about relevance in the training data, we should avoid this assumption since we don’t have access to the training data.

Blessing the evaluation model with extra relevance information is what makes B\"uttcher et al's~\citep{Buttcher2007} results work: the model is trained on relevance data, and so the trained model has the advantage over any system it is measuring that doesn’t have access to that relevance data.

We actually already knew this: the fact that relevance feedback improves retrieval is a basic result in IR.  If we have relevance information gleaned from many systems as we do when pooling, the outputs will perform better than any individual system and thus we can measure any individual systems with less information about relevance than the collective pool.

This still has the problem that the evaluation isn't future-proof: we might have a new model that outperforms the relevance feedback of the prior generation.  We haven't seen this yet when the collections are pooled from older systems only, if the collection is well-judged~\citep{voorhees-soboroff-lin-2024}.  Or new models might have TREC triples (topic, document, relevance) as an explicit component in their training data and be able to make use of that in a retrieval setting.

One simple idea that seems promising is to employ a LLM to follow the assessor and look out for mistakes.  This can't be done by simply asking, ``Did you mean `relevant' instead?'' since people are primed to trust the computer more than they should~\citep{LOGG201990, bogert21}.\footnote{\cite{LOGG201990} is also interesting as much behavioral literature four to five years prior seemed to find that people distrusted algorithmic recommendations.  Perhaps our perspectives are changing with exposure.  But see also~\cite{doi:10.1126/science.1207745} on search's effects on memory.}  But it may be possible to automate a quality-control process using a model.

There are evaluation activities that don’t involve creating an answer key.  For example, in a user study, researchers observe user behavior and analyze those observations to draw conclusions about the experiment.  LLMs might be useful in supporting the observational process (perhaps by transcribing mouse movements and clicks in a readable way) or the analysis process (much as we use statistical models to determine significance).

At the SIGIR workshop, a questioner asked if an LLM-generated evaluation might still be useful even given its flaws.  For example,~\cite{thomas2024} found LLM judgments to be as useful as crowdsourced judgments, but not better than curated judgments from a trained team.  In their setup, crowd judgments represented a low rung in a tiered hierarchy of relevance judgments and system measurements.  If the judgments are not meant to support a rigorous evaluation but rather as noisy training data, then the LLM judgments may be useful.  But if the LLM creating the truth data is part of the search system, or is of an older generation than the search system, the results may under-report performance and not be able to distinguish improvements, as shown above.  In all cases it should be kept in mind that the ideal used as a comparison point is not human performance, but model performance.

\section{Conclusions}

I have discussed the limitations of using models to create relevance judgments.  You don't want to do that, because then you have limited what you can measure to the level of the generating model.  If the generating model is also part of the evaluated systems, you are stuck in a loop, or perhaps falling into a bottomless pit.

This is similar to model collapse~\citep{guo2024}.  When you train the model using its own outputs, the performance of the model decreases.  The collapse mechanism is the measurement error that comes from generating the truth data using the evaluated system.  In this case, evaluation is a loss function computed based on the generated truth data.

This doesn't mean that LLMs can't allow us to do amazing things.  As someone who got his start in IR working with LSI~\citep{Deerwester1990}, which is essentially an optimal linear embedding, I am very excited by the idea of nonlinear embeddings.  IR systems that use LLMs to surmount the vocabulary boundary have enormous promise for real users.

All models have limits, and humans do too.  If we want to use the model to evaluate performance, we first need to consider if we are doing something past the ability of the model as used in that evaluation paradigm.  The relevance judgments barrier is a fundamental limitation of evaluations that measure systems against ground truth.

\section{Acknowledgments and Disclosure of Funding}

I thank Ellen Voorhees and Rikiya Takehi for their comments on the talk and on early versions of this paper.  I also thank the attendees of the SIGIR 2024 LLM4Eval workshop for their insightful questions and continuing discussions.

No external funding was received in support of this work.

The TREC activity has been annually reviewed by NIST’s Research Protection Office and determined to not be human subjects research.

Any company, product or service mentioned in this paper should not be taken as an endorsement of that company, product, or service by NIST.  Nothing in this paper should be read as a comparison to or among commercial products.

\bibliography{refs.bib}

\end{document}